\newcommand{\eff}{\mathrm{eff}}
\newcommand{\R}{\mathbb{R}}
\newcommand{\rem}[1]{}
\renewcommand{\a}{\boldsymbol{a}}
\renewcommand{\l}{\boldsymbol{l}}
\newcommand{\e}{\boldsymbol{e}}
\newcommand{\z}{\zeta}
\newtheorem{theorem}{Theorem}
\title{The Lagrange top and the fifth Painlev\'e equation}
\author{Holger R.~Dullin}
\address{School of Mathematics and Statistics \\ University of Sydney}
\email{holger.dullin@sydney.edu.au}
\date{}                                           
\begin{document}
\begin{abstract}
We show that the Lagrange top 
with a linearly time-dependent moment of inertia is equivalent to 
the degenerate fifth Painlev\'e equation. More generally we show that the harmonic Lagrange top
(the ordinary Lagrange top with a quadratic term added in the potential) is equivalent to 
the fifth Painlev\'e equation when the potential is made time-dependent in an appropriate way.
Through this identification two of the parameters of the fifth Painlev\'e equation
acquire the interpretation of global action variables. We discuss the relation to the confluent 
Heun equation, which is the Schr\"odinger equation of the Lagrange top, and discuss the dynamics
of $P_V$ from the point of view of the Lagrange top.
\end{abstract}

\maketitle

\section{Introduction}

The Painlev\'e equations are six non-linear second order ODEs, all of whose moveable singularities are poles.
They were initially studied by P.~Painlev\'e, B.~Gambier, R.~Fuchs and others around 1900, and today 
are at the centre of the theory of integrable systems.
For a general introduction see \cite{Iwasaki91,Gromak02,noumi04,Fokas06,conte19}.
Painlev\'e equations appear in the Ising model \cite{wu76}, 
plasma physics \cite{hastings80},
Bose gas \cite{jimbo80},
random matrix theory \cite{tracy94}, 
as reductions of integrable PDEs \cite{ablowitz03}, 
and we refer to \cite{conte19} for a more extensive list of applications.
The fifth Painlev\'e equation, denoted by $P_V$, for $w = w(\z)$ is
\begin{equation} \label{eqn:PVw}
\frac{{\mathrm{d}}^{2}w}{{\mathrm{d}\z}^{2}}=
\left(\frac{1}{2w}+\frac{1}{w-1}\right)\left(\frac{\mathrm{d}w}{\mathrm{d}\z}\right)^{2}
-\frac{1}{\z}\frac{\mathrm{d}w}{\mathrm{d}\z}
+\frac{(w-1)^{2}}{\z^2}\left(\alpha w+\frac{\beta}{w}\right)
+\frac{\gamma w}{\z}
+\frac{\delta w(w+1)}{w-1}
\end{equation}
where $\alpha, \beta, \gamma, \delta$ are constants, see,~e.g.,~\cite{DLMF}.

In this paper we would like to add the Lagrange top to the list of applications: 
The fifth  Painlev\'e equation $P_V$ describes the symmetric rigid body with a fixed point in a quadratic potential,
i.e.\ the harmonic Lagrange top of \cite{DDN22}, with a time-dependent potential. 
Furthermore, the usual Lagrange top in the linear potential of gravity with a moment of inertia depending linearly on time is equivalent to $P_V$ with $\delta = 0$, the so called degenerate case of $P_V$. 
In a somewhat similar spirit a connection between a non-autonomous Euler top with extra gyroscopic terms and $P_{VI}$ has been reported in \cite{LevinOlshanetskyZotov06}.

In \cite{DDN22} we showed that the quantisation of the harmonic Lagrange top (i.e.\ a symmetric top in a quadratic potential) leads to a 
Schr\"odinger equation which is the confluent Heun equation. 
In \cite{Slavyanov96,Slavyanov00} it was shown that Heun equations 
are related to Painlev\'e equation by a kind of de-quantisation procedure.
In fact the relation between the Heun equation and the Painlev\'e equation was classically known, for some modern references see
\cite{Fokas06,Dubrovin18,Lisovyy21}.
This motivated the idea that the harmonic Lagrange top when appropriately turned into a non-autonomous system is equivalent to $P_V$. Here we are going to show that this is indeed the case.
We directly establish the equivalence of $P_V$ and the non-autonomous (harmonic) Lagrange top
without the detour through the Heun equation, but will comment on the connection to the Heun equation in a later section.
In the two final section we consider regulariations of the singular points $w = 0, \infty$ motivated through the Lagrange top.
After symplectic reduction by one $S^1$ symmetry the dynamics of the Lagrange top lives on $T^*S^2$ and gives
a singularity free description of the dynamics of $P_V$ on $S^2$, and also a simple qualitative description of real solutions of $P_V$. 
In the final section we consider the 
full singular symmetry reduction by $S^1 \times S^1$ which leads to dynamics on an orbifold.
Both these description could be considered as a kind of blow-up of $P_V$.

\section{Trigonometric form of $P_V$}

Changing the independent variable to $\tau = \log \z$ 
and redefining the constants according to 
$\kappa_\infty^2 = 2 \alpha$, $\kappa_0^2 = -2 \beta$
gives the modified fifth Painlev\'e equation  \cite{Gromak02} as
\[
\frac{{\mathrm{d}}^{2}w}{{\mathrm{d}\tau}^{2}}=
\left(\frac{1}{2w}+\frac{1}{w-1}\right)\left(\frac{\mathrm{d}w}{\mathrm{d}\tau}\right)^{2}
+ \frac12 (w-1)^{2} \left(\kappa_\infty^2 w-\frac{\kappa_0^2}{w}\right)
+ \gamma e^\tau w
 + \delta e^{2\tau} \frac{w(w+1)}{(w-1)} \,.
\]
This equation has the property  that every solution is locally meromorphic \cite{Joshi94,Hinkkanen01}.
The new form of the parameters is convenient for discussion of the affine Weyl symmetry group $W(A_3^{(1)})$ \cite{Okamoto87},
and in particular also for the description of special function solutions and rational solutions \cite{Kitaev94,Kajiwara02,Gromak02,Umemura96,Clarkson05}.

The first polynomial Hamiltonian form of $P_V$ was given by Okamoto  \cite{Okamoto87}.
A Hamiltonian form of $P_{VI}$ in which the Hamiltonian has the standard form $H = \tfrac12 p^2 + V(q)$ 
was given by Manin \cite{Manin96,Manin05} where $V$ is given in terms of the Weierstra\ss{} $\wp$ function,
although the corresponding form of $P_{VI}$ was already described in slightly different form by Fuchs \cite{Fuchs1905} and Painleve \cite{Painleve1906}.
This and analogous transformations for other Painlev\'e equations were given by Babich and Bordag \cite{BabichBordag99}, 
Iwasaki \cite[4.2.1]{Iwasaki91}, Takasaki \cite{Takasaki01}, also see \cite{LevinOlshanetsky00}.
Introducing a new dependent variable by $w = \coth^2 y/2$ transforms the modified $P_V$ equation
into the hyperbolic form
\[
  \frac{d^2 y}{d \tau ^2} = -V', \quad 
  V(y) = -\frac{\kappa_\infty^2}{ 2 \sinh^2 (y/2)} + \frac{\kappa_0^2}{2 \cosh^2 (y/2)} + \frac{ \gamma e^\tau }{2} \cosh y
         + \frac{ \delta e^{2\tau} }{4} \cosh^2 y \,.
\]
In order to obtain an equation related to the Lagrange top instead we consider the slightly different transformation $w = -\cot^2 y/2$, which leads to 
\begin{equation} \label{eqn:trigform}
   \frac{d^2 y}{d\tau^2} = -V', \quad 
  V(y) = -\frac{\kappa_\infty^2}{2 \sin^2(y/2)} - \frac{\kappa_0^2}{2 \cos^2(y/2)} - \frac{ \gamma e^\tau }{2} \cos y
         - \frac{ \delta e^{2\tau} }{4} \cos^2 y \,.
 \end{equation}
 We call this equation the trigonometric form of $P_V$.
 It is obtained from the hyperbolic form by the simple transformation $y \to i y$.
 A frozen time version of this equation is obtained by setting $\tau=0$ in the exponential terms, and in frozen time this is the equation for the harmonic Lagrange top, as we are now going to show.

\section{The Lagrange top}

The  Lagrange top is a symmetric heavy rigid body with a fixed point on the symmetry axis. The configuration space is $SO(3)$.
In Euler angles $\phi, \theta, \psi$ it has a metric on $SO(3)$ defined by the kinetic energy, see, e.g., \cite{LanLif84}, as
\begin{equation} \label{eqn:Tkin}
   T_\mathrm{rot} = \tfrac12 I_1 ( \dot \phi^2 \sin^2 \theta + \dot\theta^2) + \tfrac12 I_3 ( \dot \phi \cos\theta + \dot\psi)^2 \,,
\end{equation}
where $\phi$ and $\psi$ are $2\pi$-periodic angles and $\theta \in [0, \pi]$, and $I_1=I_2$ and $I_3$ are the principal moments of inertia of the body with respect to the fixed point.
A Legendre transformation leads to the corresponding Hamiltonian 
\begin{equation} \label{eqn:HLag}
    H =  \frac{1}{2 I_1} \left( p_\theta^2 + \frac{1}{\sin^2\theta} ( p_\phi^2 + p_\psi^2 - 2 p_\phi p_\psi \cos\theta) \right) + 
    \frac12 \left( \frac{1}{I_3} - \frac{1}{I_1} \right) p_\psi^2 + U( \cos\theta)
\end{equation}
where the potential $U$ depends on $z = \cos\theta$, the spatial $z$-coordinate of the tip of the axis of the top.
The usual Lagrange top in the field of gravity has only a linear term proportional to $z$ in the potential. 
The harmonic Lagrange top studied in \cite{DDN22} adds a  quadratic term and hence we consider $U(z) = c z + d z^2$.
The potential is left somewhat general as a function $U$ because we will later also allow for time-dependence in $U$.
Both momenta $p_\phi$ and $p_\psi$ are constants of motion, since the angles $\phi$ for rotation about
the direction of gravity and $\psi$ for rotation about the symmetry axis of the body are both cyclic.
The kinetic energy in the above Hamiltonian is split into a kinetic term that corresponds to the ``round'' top with all moments of inertia equal to $I_1$,
and an asymmetry ``correction" proportional to the angular momentum for rotation about the symmetry axis of the body $p_\psi^2$.
This correction term is irrelevant for the dynamics of $\theta$.

In the Lagrange top with time-dependent moments of inertia and/or time-dependent potential the momenta $p_\phi$ and $p_\psi$ are still constants of motion.
Thus the essential dynamics is given by a (singularly) reduced one degree of freedom system in which the momenta $p_\phi$ and $p_\psi$ are parameters and all the terms but $p_\theta^2$ are considered as the effective potential of the reduced system
\begin{equation}
     H = \frac{1}{2 I_1} p_\theta^2 + U_{\eff} (\cos\theta; p_\phi, p_\psi) \,.
\end{equation}
The  angles $\phi$ and $\psi$ are driven by the dynamics of $\theta$ through Hamilton's equation
\begin{equation} \label{eqn:phipsi}
    \frac{d \phi}{d t}  = \frac{p_\phi - 2 p_\psi \cos\theta}{I_1 \sin^2\theta} , \quad
    \frac{d \psi}{d t} = \frac{p_\psi - 2 p_\phi \cos\theta}{I_1 \sin^2\theta}  + \left( \frac{1}{I_3} - \frac{1}{I_1}\right) p_\psi \,.
\end{equation}

The Lagrange top (without time-dependent terms) is Liouville integrable with integrals $H=E$, $p_\phi$, $p_\psi$. 
The typical motion is quasiperiodic on 3-dimensional tori in phase space. 
In this motion the tip of the axis of the top oscillates between $\theta_{min}$ and $\theta_{max}$ determined 
by $p_\phi$, $p_\psi$, and $E$, while rotating about its axis.
The constants of motion $p_\phi$ and $p_\psi$ are global action variables, they generate $2\pi$-periodic flows which 
are the rotation about the axis of gravity and the rotation about the axis of symmetry of the top, respectively. 
The third action variable is given by a complete elliptic integral of 3rd kind. 
Solutions on 2-dimensional tori occur for $\theta = const$ in which the tip of the axis of the top traces out a horizontal circle.
Isolated periodic solutions are the so-called sleeping tops with $\theta = 0$ (upright) or $\theta = \pi$ (hanging) where the axis of symmetry is parallel to the direction of gravity and the top is rotating about this axis. The sleeping tops are only possible for $p_\phi  \pm p_\psi = 0$, so that the term in the Hamiltonian that is singular for $\theta \to 0$ or $\theta \to \pi$, respectively, disappears. These linear combinations of $p_\phi$ and $p_\psi$ will play an essential role in the following. Finally, for $p_\phi = p_\psi = 0$ 
there are two equilibrium points corresponding to minimal and maximal potential energy.

\section{The equivalence between $P_V$ and the Lagrange top}

Now the stage is set to show that the two dynamical systems described in the previous two sections 
are actually equivalent with the appropriate choice of variables, parameters, and potentials.

\begin{theorem}
The trigonometric form of $P_V$ is the equation of motion for the harmonic Lagrange top where 
$y=\theta$, $\kappa_0^2 = -(p_\phi+p_\psi)^2/4$, $\kappa_\infty^2 = -(p_\phi-p_\psi)^2/4$, $\tau = t/I_1$,
 and $U$ is the time-dependent potential $U(z) = -( \tfrac12 \gamma e^\tau z + \tfrac14\delta e^{2\tau} z^2)/I_1$.
\end{theorem}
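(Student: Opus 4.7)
The plan is straightforward calculation: start from the reduced Lagrange top, compute its equation of motion for $\theta$, and match it term-by-term to the trigonometric form of $P_V$ in \eqref{eqn:trigform}.

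First I would write the effective one-degree-of-freedom Hamiltonian obtained from \eqref{eqn:HLag} by freezing the cyclic momenta $p_\phi, p_\psi$, namely
\[
H = \frac{p_\theta^2}{2 I_1} + U_{\eff}(\cos\theta), \qquad
U_{\eff}(\cos\theta) = \frac{p_\phi^2 + p_\psi^2 - 2 p_\phi p_\psi \cos\theta}{2 I_1 \sin^2\theta} + U(\cos\theta),
\]
dropping the $p_\psi^2$ asymmetry term which does not depend on $\theta$. Hamilton's equations give $I_1 \ddot\theta = -\partial_\theta U_{\eff}$, and after the linear rescaling $\tau = t/I_1$ this becomes $\theta'' = -\partial_\theta (I_1 U_{\eff})$, with primes denoting $\tau$-derivatives. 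So the task reduces to identifying $I_1 U_{\eff}$, up to an additive constant, with the potential $V(y)$ appearing in \eqref{eqn:trigform} under $y=\theta$.

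The main computational step is rewriting the centrifugal part using half-angle identities. Since $\sin^2\theta = 4 \sin^2(\theta/2)\cos^2(\theta/2)$ and $\cos\theta = \cos^2(\theta/2) - \sin^2(\theta/2)$, the partial-fraction ansatz
\[
\frac{p_\phi^2 + p_\psi^2 - 2 p_\phi p_\psi \cos\theta}{2\sin^2\theta} = \frac{A}{\sin^2(\theta/2)} + \frac{B}{\cos^2(\theta/2)}
\]
reduces to the linear system $2(A+B) = p_\phi^2+p_\psi^2$, $2(A-B) = -2 p_\phi p_\psi$, with unique solution $A = (p_\phi-p_\psi)^2/8$, $B = (p_\phi+p_\psi)^2/8$. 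Comparing with the $\kappa_\infty^2, \kappa_0^2$ terms of $V(y)$ gives precisely $\kappa_\infty^2 = -(p_\phi-p_\psi)^2/4$ and $\kappa_0^2 = -(p_\phi+p_\psi)^2/4$, matching the theorem.

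Finally, the remaining $\gamma$ and $\delta$ terms in \eqref{eqn:trigform} are polynomial in $\cos y$, so they can be absorbed into $I_1 U(\cos\theta)$ by the explicit time-dependent choice $U(z) = -\bigl(\tfrac12 \gamma e^\tau z + \tfrac14 \delta e^{2\tau} z^2\bigr)/I_1$. Since $\tau$ is the time variable of the reduced system (up to the factor $I_1$), this is an admissible non-autonomous potential of the quadratic ``harmonic'' form considered in \cite{DDN22}. I do not expect any real obstacle here — the proof is a verification, and the only non-trivial identity is the half-angle partial fraction above, which produces exactly the squares $(p_\phi\pm p_\psi)^2$ that were already anticipated from the discussion of sleeping tops at the end of the previous section.
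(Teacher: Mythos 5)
Your proposal is correct and arrives at exactly the identities of the theorem, but it takes a slightly different route from the paper. You work directly in the Euler-angle Hamiltonian \eqref{eqn:HLag} and convert the centrifugal term $\bigl(p_\phi^2+p_\psi^2-2p_\phi p_\psi\cos\theta\bigr)/(2\sin^2\theta)$ into half-angle form by a partial-fraction ansatz; the paper instead first performs the canonical transformation $\phi_\pm=\phi\pm\psi$, $2p_\pm=p_\phi\pm p_\psi$, which diagonalises the round metric on $SO(3)$ into Hopf coordinates on $S^3$ and makes the Hamiltonian \eqref{eqn:Ham} land in half-angle form automatically, with $\kappa_0^2=-p_+^2$, $\kappa_\infty^2=-p_-^2$. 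The two computations are algebraically the same identity; what the paper's version buys is the geometric explanation of \emph{why} the combinations $p_\phi\pm p_\psi$ appear (they are the momenta conjugate to the angles that remain well defined at $\theta=0,\pi$, i.e.\ at the sleeping tops and at the singular points $w=0,\infty$ of $P_V$), plus the side observation linking $P_V$ to the Neumann system on $T^*S^3$. One small arithmetic slip in your write-up: matching coefficients actually gives the linear system $4(A+B)=p_\phi^2+p_\psi^2$, $4(A-B)=-2p_\phi p_\psi$ (not $2(A+B)=\cdots$, $2(A-B)=\cdots$); your stated solution $A=(p_\phi-p_\psi)^2/8$, $B=(p_\phi+p_\psi)^2/8$ is the correct one for the corrected system, and the rest of the identification (the time rescaling $\tau=t/I_1$ turning $I_1\ddot\theta$ into $\theta''=-\partial_\theta(I_1U_{\eff})$, and the absorption of the $\gamma,\delta$ terms into the time-dependent $U$) goes through as you describe.
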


\begin{proof}
Consider the metric of the round $SO(3)$ of the rigid body with a fixed point given by 
\[
      \frac{1}{I_1} ds^2 = d\theta^2 + d\phi^2 + d\psi^2 + 2 \cos\theta d\phi d\psi
\]
obtained from the kinetic energy \eqref{eqn:Tkin} for $I_3 = I_1$.
This is a metric of constant sectional curvature $ 3/(2 I_1)$
whose Ricci tensor is proportional to the metric with proportionality factor $1/(2 I_1)$. Hence up to a covering it is equivalent 
to the metric of the round sphere $S^3$. To make this explicit introduce
new angles  $\phi_\pm$  through  $\phi_\pm = \phi \pm \psi$. 
In these coordinates the metric becomes diagonal
\[
    \frac{1}{I_1} ds^2 
    =  d\theta^2 +  \cos^2 \tfrac{\theta}{2}  d \phi_+^2  +  \sin^2 \tfrac{\theta}{2} d\phi_-^2  \,,
\]
and this is the metric of the Hopf coordinates on the sphere $S^3$ with angles $\phi_\pm$.
Note that at the coordinate singularity of the Euler angles where $\theta = 0$ only $\phi_+$ is defined,
while at $\theta = \pi$ only $\phi_-$ is defined.
Extending this to a symplectic transformation the momenta are given by $2 p_\pm = p_\phi \pm p_\psi$ and transforming \eqref{eqn:HLag} the new Hamiltonian is
\begin{equation} \label{eqn:Ham}
   H = \frac{1}{2I_1} \left( p_\theta^2 + \frac{p_+^2}{ \cos^2\theta/2} + \frac{p_-^2}{\sin^2\theta/2} \right) + 
    \frac12 \left( \frac{1}{I_3} - \frac{1}{I_1} \right) (p_+-p_-)^2
     +U(\cos\theta) \,.
\end{equation}
The overall factor $1/I_1$ can be removed by introducing a new time $\tau = t/I_1$.
The term proportional to $(p_+-p_-)^2$ has no influence on the dynamics of $\theta$ 
and can  be ignored. Thus define 
\[
   U_{\eff}(\cos\theta) = \frac{p_+^2}{2 \cos^2\theta/2} + \frac{p_-^2}{2 \sin^2\theta/2} + I_1 U(\cos\theta)
\]
as the effective potential relevant for the dynamics of $\theta(\tau)$.
Now Hamiltons equations for $\theta$ are equivalent 
to the trigonometric form \eqref{eqn:trigform} of $P_V$ in $y$ if we set
$V = U_\eff$ and hence the parameters in the effective potential are
$\kappa_0^2 = -p_+^2$, $\kappa_\infty^2 = -p_-^2$ 
and the coefficients in the potential $U(z) = c z + d z^2$ need to be chosen as
$ c = -\tfrac12 \gamma e^\tau / I_1$ and 
$ d = -\tfrac14 \delta e^{2\tau} / I_1$.
\end{proof}
The parameters $p_\pm$ are action variables and are therefore real for the Lagrange top, 
and hence the parameters $\kappa_0$, $\kappa_\infty$ in $P_V$ will be purely imaginary.
In particular this means that any rational solutions that appear for integer
or half-integer values of $\kappa_0$, $\kappa_\infty$, see, e.g., \cite{Kitaev94,Umemura96,Clarkson05}, are not relevant for the real Lagrange top,
similarly for special function solutions.
The transformation $w \to 1/w$ does map $P_V$ into itself with changed
parameters $(\alpha, \beta, \gamma) \to (-\beta, -\alpha, -\gamma)$.
However in terms of the signed parameters this becomes
$(\kappa_0^2, -\kappa_\infty^2, \gamma) \to (\kappa_\infty^2, -\kappa_0^2, -\gamma)$ 
and so is not able to flip the signs of $\kappa_0^2$, $\kappa_\infty^2$.
The only rational solution that does exists is the seed solution for 
B\"acklund transformations $w = -1$ for $\alpha + \beta = 0$ and $\gamma = 0$.
This is an equilibrium point of the potential $\delta \cos^2\theta$ at $\theta = \pi/2$.
The other two equilibrium points at $\theta = 0, \pi$  correspond to the singularities $w \to -\infty$ and $w \to 0$ in $P_V$, respectively.

The transformation of the metric to diagonal form suggest that another natural 
identification of $P_V$ can be made with the degenerate  Carl Neumann system on $T^*S^3$, see \cite{DH03}, where either the size of the sphere and/or the potential is time-dependent.

A different time-dependence for the Lagrange top is achieved by changing the moments of inertia, which is used to great effect, e.g., by figure skaters, and the next theorem is about this time-dependence. Note, however, that the figure skater mainly changes the moment of inertia $I_3$ about the axis of symmetry, which by way of \eqref{eqn:phipsi} will change the dynamics of $\psi$, the angle of rotation about that axis. Typically there will also be a small change in the moment of inertia $I_1$, and it is the time-dependence of $I_1$ that changes the dynamics of $\theta$, and thus gives the correspondence with $P_V$.

\begin{theorem}
The trigonometric form of the degenerate $P_V$  equation where $\delta = 0$ is the equation of motion for the Lagrange top with time-dependent moment of inertia $I_1(t) = a + b t$.
\end{theorem}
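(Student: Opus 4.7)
The plan is to repeat the strategy of Theorem~1 almost verbatim, except that the exponential $\tau$-dependence which there was inserted into the coefficients of $U$ will now appear automatically from the time change $d\tau/dt = 1/I_1(t)$ once $I_1$ is affine in $t$. The potential is now just the gravitational term $U(z)=cz$, with no quadratic piece, and that is what will make $\delta$ vanish on the $P_V$ side.

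First, I would write the Lagrange top Hamiltonian \eqref{eqn:HLag} with $I_1$ replaced by $I_1(t) = a+bt$ and $U(z)=cz$. The angles $\phi$ and $\psi$ remain cyclic even for time-dependent $I_1$, so $p_\phi$ and $p_\psi$ are still conserved and the same singular reduction as in Theorem~1 produces a time-dependent one-degree-of-freedom system for $\theta$ involving exactly the same ``centrifugal'' expression
$W(\theta) = (p_\phi^2+p_\psi^2 - 2p_\phi p_\psi\cos\theta)/\sin^2\theta$. From $p_\theta = I_1(t)\dot\theta$ and $\dot p_\theta = -\partial H/\partial\theta$ the reduced equation of motion is
\[
   \frac{d}{dt}\bigl(I_1(t)\dot\theta\bigr) = -\frac{W'(\theta)}{2I_1(t)} + c\sin\theta.
\]

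Second, I would change independent variable to $\tau$ via $d\tau/dt = 1/I_1(t)$, exactly as in Theorem~1. A short chain-rule computation shows that the whole left-hand side, including the apparent ``friction'' contribution $\dot I_1\dot\theta$ coming from expanding the total derivative, collapses to $\theta''/I_1$, where $' = d/d\tau$. Multiplying through by $I_1(t)$, the equation becomes $\theta'' = -\tfrac12 W'(\theta) + c\,I_1(t)\sin\theta$. For $I_1(t) = a+bt$ the time change integrates to $\tau = \tfrac{1}{b}\log\tfrac{a+bt}{a}$, so that $I_1(t) = a\,e^{b\tau}$ is a pure exponential in $\tau$. A trivial rescaling $\tau \to b\tau$ that absorbs the exponent parameter $b$ then identifies the $\theta$-equation with the trigonometric form \eqref{eqn:trigform} of $P_V$ in the case $\delta = 0$, with the parameter dictionary inherited from Theorem~1: $\kappa_0^2$, $\kappa_\infty^2$ are proportional to $-p_\pm^2$ (with an extra $1/b^2$ factor from the rescaling), and $\gamma$ is proportional to $-ca/b^2$.

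The main --- and essentially the only --- obstacle is checking that the ``friction'' term produced by $(d/dt)(I_1\dot\theta)$ cancels exactly against the extra term from the chain rule for $d^2\theta/dt^2$ in terms of $d^2\theta/d\tau^2$, so that the resulting equation in $\tau$ carries no first-derivative term and hence can coincide with $P_V$ in its trigonometric form. Once this cancellation has been verified, the degenerate case $\delta = 0$ follows immediately from the absence of a $z^2$ term in $U$, and the result is established.
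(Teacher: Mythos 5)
Your proposal is correct and follows essentially the same route as the paper: reduce to the one-degree-of-freedom $\theta$ system, change time via $dt = I_1(t)\,d\tau$ so that $I_1(t)=a+bt$ becomes a pure exponential $a e^{b\tau}$, rescale $\tau$ by $b$, and match with the trigonometric form of $P_V$ with $\delta=0$. The paper performs the time change directly at the level of the Hamiltonian, whereas you verify the cancellation of the first-derivative ("friction") term explicitly at the level of the second-order ODE, but this is the same argument made slightly more explicit, and your parameter dictionary (the $1/b^2$ factors and $\gamma\propto ca/b^2$) agrees with the paper's $\gamma = ag/b^2$ up to the sign conventions of Theorem~1.
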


\begin{proof}
In this case the potential is simply $U = g \cos\theta$.
The proof proceeds as in Theorem~1 until the time is scaled.
In order to remove the time-dependent moment of inertia $I_1(t)$ from the kinetic energy introduce a new time by 
$dt = I_1(t) d\tilde \tau$. Now let $I_1(t) = a + b t$ and integration gives $\log(a + b t) = b(\tilde \tau - \tau_0)$ and hence 
$I_1(t) = a e^{b\tilde \tau}$. Finally define $\tau = b \tilde \tau$ and the Hamiltonian
\[
   H = \frac{1}{2} p_\theta^2 + U_\eff(\cos\theta), \quad
   U_\eff(\cos\theta) =  \frac{p_+^2}{2b^2 \cos^2\theta/2} + \frac{p_-^2}{2 b^2 \sin^2\theta/2} +  e^{\tau} \gamma \cos\theta
\]
where $\gamma = a g / b^2$ is that of the degenerate $P_V$ equation.
Transforming back to the original time $t$ we see that the Hamiltonian of the Lagrange top
in which $I_1(t) = a + b t$ directly gives the degenerate $P_V$ equation in the original time $t$.
\end{proof}

\section{The connection to the confluent Heun equation}

The confluent Heun equation written in the self-adjoint form (known as the generalised spheroidal wave equation) is
given by the linear 2nd order differential operator
\begin{equation} \label{eqn:Heun}
   L_{CH} = - \frac{1}{\sin\theta} \partial_\theta( \sin\theta \partial_\theta) + \frac{p_+^2}{\cos^2\theta/2} + \frac{p_-^2}{\sin^2\theta/2} + 2 I_1 c \cos\theta + 2 I_1 d \, \cos^2\theta
\end{equation}
as $L_{CH} \psi = \lambda \psi$ where the eigenvalue $\lambda$ is also called the accessory parameter in the 
context of the Heun equation.
The operator $L_{CH}$ is obtained from the Hamiltonian of the harmonic Lagrange top \eqref{eqn:Ham} by canonical quantisation,
i.e.{} by replacing the kinetic energy with the negative Laplace-Beltrami operator.
The trivial separated equations for $\partial_{\phi_\pm}^2$ 
with periodic boundary conditions are solved and integer values $p_\pm$ are inserted into the remaining operator.
The algebraic form of the equation 
is obtained by introducing $ z  = \cos\theta$ which is the $z$-coordinate of the axis of the top.
The resulting confluent Heun differential operator in algebraic form is
\[
   L_{CH} = - \partial_ z ( ( 1 -  z ^2) \partial_ z ) + \frac{2 p_+^2}{1+ z } + \frac{2 p_-^2}{1 -  z } + 2 I_1 c  z  +2  I_1 d \,  z ^2  \,.
\]
The indices at the regular singular points $z = \pm1$ are  $p_+$ and $p_-$, respectively.
Extending $z = \cos\theta$ to a canonical transformation turns the Hamiltonian \eqref{eqn:HLag} into
\begin{equation} \label{eqn:Hamz}
    H = \frac{1}{2 I_1(t)} \left( ( 1 -  z ^2  ) p_ z ^2 + \frac{2 p_+^2}{1+ z } + \frac{2 p_-^2}{1 -  z }\right) + U( z ) \,.
\end{equation}
Compared to $L_{CH}$ only the first term changes sign, since  $p_+$ and $p_-$ in $L_{CH}$  are already quantum numbers
(or classical actions) and not differential operators any more.
In terms of the original variable $w$ of $P_V$ introducing $z$ amounts to the M\"obius transformation $w =  -(1 +  z )/(1- z )$
that maps the interval $[-1,1]$ in $ z $ to $[0,-\infty]$ in $w$. 
Absorbing $I_1$ into $U$ as before by scaling time we find 
\[
    \frac{d  z }{d \tau} = ( 1 -  z ^2) p_ z , \quad \frac{d p_ z }{d \tau} = -\frac{ \partial H}{\partial  z }
\]
and eliminating $p_ z $ we obtain a version of $P_V$ that is 
the de-quantisation 
of the algebraic form of the generalised spheroidal wave equation
(aka the quantised harmonic Lagrange top), which is
\begin{equation} \label{eqn:PVz}
  \frac{1}{1 -  z ^2} \frac{d^2  z }{d \tau^2} =  \frac{ - z }{(1 -  z ^2)^2} \left(\frac{d z }{d\tau}\right)^2
  + \frac{p_+^2}{(1 +  z )^2} -  \frac{p_-^2}{(1 -  z )^2} - \gamma e^{\tau} - \frac12 \delta e^{2\tau}  z  \,.
\end{equation}
This equation has singularities at $ z  = \pm 1$.
Interestingly, it is also this form that for $\delta = 0$ is most easily mapped to $P_{III}$ \cite{Clarkson05}.


A natural question that arrises is what the actual quantisation of $P_V$ gives. 
Since it is a Hamiltonian system with explicit time-dependence this  leads to a time-dependent Schr\"odinger equation
\[
    i \hbar \frac{\partial}{\partial t} \Psi(\theta, t) = L_{CH} \Psi(\theta, t)
\]
where now the potential in $L_{CH}$ in \eqref{eqn:Heun} has the time-dependence that comes from $P_V$.
This is a $1+1$-dimensional PDE for $\Psi$. 
Some steps in this direction have been taken in \cite{Zabrodin12}.
Interesting connections between quantisation and the Painlev\'e equation are discussed in \cite{Grassi22}.
In \cite{DDN22} we have shown that the quantised Lagrange top, i.e.\ the confluent Heun equation,
has quantum monodromy, which means there is a defect in the joint spectrum of the corresponding commuting 
operators. It would be very interesting to try to understand how this quantum monodromy is connected to 
the iso-monodromy problem associated to $P_V$.

\section{Dynamics on $S^2$}

The motion of the Lagrange top is smooth on $T^*SO(3)$.  Using Euler angles introduces a coordinate singularity
at $\theta = 0, \pi$. This coordinate singularity corresponds to a pole in $P_V$. In this section we are going to 
use the reduction of the Lagrange top to $T^*S^2$ to obtain a global singularity free description
of the dynamics on $S^2$. This can be considered as physically motivated blowup of $P_V$. 
The full symmetry group of the Lagrange top is $S^1\times S^1$, however, there is isotropy
of the group action when the rotation axis are parallel, and hence the fully symmetry reduced system 
is singular at $\theta = 0, \pi$. Only reducing by one of the two $S^1$ symmetries leads to a smooth 
system with two degrees of freedom.

After reduction by the body symmetry the Lagrange top is a Hamiltonian dynamical system on $T^*S^2$.
For more details on the derivation of these equations and the associated Poisson structure see, e.g., \cite{DDN22}.
Denote the axis of the top by $\a \in S^2 \subset \R^3$, $ |\a| = 1$, and by $\l$ the momentum vector in the 
tangent space such that $\l \cdot \a = L_3 = const$. Denote the components of these vectors by 
$(a_x, a_y, a_z)$ and $(l_x, l_y, l_z)$. 
Note that in \eqref{eqn:PVz} the single dependent variable is $a_z \equiv  z $.
The Hamiltonian of the system written in $(\a, \l)$ is 
\[
      H = \frac{1}{2} |\l|^2 + U(a_z)
\]
with equations of motion
\[
      \a' = -\a  \times \l, \quad
      \l' = -\a \times \frac{\partial U}{\partial \a} = -\a \times \e_z U'(a_z) \,.
\]
Here we assume that time has been changed so that $I_1$ is absorbed into $U$, possibly creating time-dependence, and the dash denotes derivatives with respect to the time $\tau$.
In the usual Lagrange top $U$ is linear in $z\equiv a_z$ and hence $U' = c e^\tau$,
or in the harmonic Lagrange top it is $U' = c e^\tau + 2 d a_z e^{2\tau}$.
The case of constant moment of inertia is recovered by setting $\tau = 0$.
The system is invariant under simultaneous rotation of $\a$ and $\l$ about the $z$-axis, and 
the corresponding conserved quantity is $l_z$. Thus after full symmetry reduction the system has one degree of freedom. 
The description presented earlier using Euler angles directly provides this one degree of freedom system. In that notation we have $a_z = \cos\theta$, $\l \cdot \a = L_3 = p_\psi$ and
$l_z = p_\phi$. The problem with Euler angles is that they are singular for $\theta = 0, \pi$
which corresponds to a coordinate singularity in the Euler angles because for these $\theta$ the angles
$\phi$ and $\psi$ are not uniquely defined, but only their sum or difference is.
In $P_V$ the corresponding singularity are $z= \pm1$ in \eqref{eqn:PVz} or at $w = 0$ and $w = -\infty$ in \eqref{eqn:PVw}.
The present description of the Lagrange top as a system on $T^*S^2$ has the advantage
that it provides a natural smooth coordinate system near these singularities.
Note that for real motions $w \le 0$ and in particular the singularity of $P_V$ at $w=1$ does not correspond to a real motion of the real Lagrange top in real time.

Since $l_z$ is constant and $a_z$ is determined through $a_x^2 + a_y^2 + a_z^2 = 1$ we can 
project the equations onto the $xy$-components
and write it  in complex form with $a = a_x + i a_y$ and $l = l_x + i l_y$ as (a deceptively linear looking) non-linear system on $\mathbb{C}^2$
\begin{equation} \label{eqn:C2sys}
   \begin{pmatrix}  a' \\  l' \end{pmatrix} = 
   i \begin{pmatrix}
   	-l_z  & a_z  \\ -U'(a_z)  & 0 
   \end{pmatrix} 
   \begin{pmatrix}
   	a \\ l
   \end{pmatrix} \,.
\end{equation}
This system of ODEs has an equilibrium point at the origin, which 
corresponds to the north- or south-pole of the sphere.
Linearisation about this equilibrium amounts to setting $a_z = \pm 1$. 
We keep $a_z$ in the equation to treat both signs simultaneously.
The resulting 2nd order linear equation is
\[
    a'' + i l_z a' - a_z U'(a_z) a = 0, \quad 
    l_z = const, \, U'(a_z) = c e^\tau + 2 d e^{2\tau} a_z,  \, a_z = \pm 1 \,.
\]
Returning to the original time $t = e^\tau$ we find $a' = t \dot a$ and $a'' = t^2 \ddot a + t \dot a$ and after
cancelling an overall factor of $t$ 
\begin{equation} \label{eqn:axiay}
    t \ddot a + ( 1 + i l_z) \dot a  - a_z (c + 2d t a_z) a = 0\,.
\end{equation}
For $\delta = 0$ this is the Bessel equation, while in general it is the confluent hypergeometric equation.
If we remove the time-dependence in the equation by setting $t = 1$ the linear equation 
describes the Hopf bifurcation by which the sleeping top is de-stabilised when the spin rate
$l_z$ becomes too slow, see, e.g., \cite{DDN22}. With time-dependent moment of inertia 
passing the stability threshold  results in the onset of oscillations. 


Solutions that are interesting from a physical point of view are those that approach $a_z \equiv z = \pm 1$ for 
$\tau \to \pm\infty$. The blow up of $P_V$ near singularities has been studied in \cite{Joshi18}.
Adding the non-linear term  $ia_z' l$  to  \eqref{eqn:axiay} where now $a_z = \pm \sqrt{ 1 - a \bar a}$ and $l$ is expressed 
in terms of $a$ and its derivative using \eqref{eqn:C2sys}, which gives
\[
  t \ddot a + (1  + i l_z ) \dot a  -  a_z (c + 2 d t a_z) a = i \frac{ a \dot{\bar a} + \bar a \dot a}{ 2 a_z^2} ( t\dot a + i l_z a)
\]
It would be interesting to study how this equation compares to the blown up $P_V$. 
The main advantage of the equation when written in $a = a_x + i a_y$
instead of $a_z$ is that it is regular near $a_z = \pm 1$. 
There is, however, a square root in the equation because
$a_z = \pm \sqrt{ 1 - a \bar a}$. 

We conclude with a qualitative discussion of solutions of $P_V$ corresponding to the real Lagrange top with 
time-dependent moment of inertia. 
It appears that the parameters relevant for this are $\alpha \le 0$, $\beta \ge 0$, $\gamma > 0$, $\delta = 0$.
For $\delta > 0$ (i.e.\ with the extra harmonic terms in the top) this is the class of solutions studied in \cite{McLeod99}.
In section 3 we gave a quick review of the properties of solutions of the time-independent Lagrange top.
What changes with the time dependence? 
The simplest case of the pendulum with time-dependent length occurs for $p_+= p_- = 0$. 
For $p_\theta = 0$ there are two equilibrium solutions at $z = \pm 1$, the minimum and 
the maximum of the potential. Now consider non-zero $p_\theta$.
Starting at $\tau = -\infty$ in this case $\theta$ increases linearly with time with slope 
given by $p_\theta$. When $\tau$ crosses towards positive times the potential becomes important,
and for $\tau \to +\infty$ the solution spirals to a potential minimum with $\theta = (2n - 1) \pi$ for some integer $n$.
While spiralling towards the minimum the energy goes to $-\infty$, since $\cos\theta \to -1$ and it 
is multiplied by an exponentially growing term.
Increasing the initial $p_\theta$  the solution will eventually change from ``basin'' $n$ to basin $n+1$.
By continuity between these lies a unique solution with a particular $p_\theta$ that will asymptote
to the potential maximum with $\theta = 2n \pi$.
On a  qualitative level the behaviour is like a pendulum with friction, but the physical process 
(and the details of the solution) are of course very different. 
Nevertheless, in both systems the exceptional solutions that approach the unstable maximum 
for $\tau \to +\infty$ exist.
Now we are going to discuss solutions where at least one $p_\pm$ is non-zero.
We are going to discuss the limit $\tau \to -\infty$ and $\tau \to +\infty$ in turns.

For $\tau \to -\infty$ the potential terms vanish, and the dynamics is free motion on $SO(3)$. 
Considering the double cover $S^3$ this implies that the solutions are great circles on $S^3$ (recall that the term 
proportional to $p_\psi^2$ in the Hamiltonian has no counterpart in $P_V$). 
Hence $z$ will oscillate between a minimum and a maximum which depend on the values of $p_\pm$. 
The only solutions that do not oscillate in this limit correspond to the great circle that 
has $z = 0$. This solution is possible only when $p_+ p_- = 0$. 

When $\tau$ reaches the vicinity of 0 the system starts to behave like the Lagrange top. 
This regime is short-lived unless all parameters are large.
Eventually for $\tau \to +\infty$ the potential dominates the Hamiltonian. 
As for the pendulum most solutions approach the potential minimum $z = -1$ in this limit.
In the time-independent Lagrange top $z = -1$ is only accessible when the conserved momentum satisfies $p_+ = 0$,
because otherwise the energy diverges, which is a contradiction to energy conservation.
However, in the time-dependent case the energy is not constant, and in fact $\dot E = \partial H / \partial \tau = \gamma e^\tau z$
which is negative for negative $z$. Thus the system will loose energy and the solutions approach $z = -1$ in an oscillatory manner. 

A different class of interesting solutions are those that approach the upright sleeping top with $z = 1$ for $\tau \to \infty$.
Solutions for which $z \equiv 1$ certainly exists but cannot be seen in $P_V$ because
of the singularity of the equation at $z =1$. 
However, for dynamics on $S^2$ the vectors $\a = (0,0,1)$ and $\l = (0,0,l_z)$ clearly correspond to that equilibrium solution. 
Can this solution be approached from $z < 1$? In the time-independent case the answer
is yes if $p_- = 0$ and the sleeping top is unstable (i.e. $l_z$ is not too large), 
in which case the equilibrium has a stable manifold along which it can be approached.
With time-dependence for $\tau \to \infty$ this will be harder, but by a continuity argument similar to that applied to 
the pendulum this is possible at least when $p_- = 0$.
Thus the most special solutions of $P_V$ related to real motions of the time-dependent Lagrange top are those that 
connect $z = 0$ at  $\tau = -\infty$ to $z = 1$ at $\tau = +\infty$ without any oscillations.

\section{$P_V$ on an orbifold}

The full symmetry reduction of the Lagrange top by both its $S^1$ symmetries leads to a Poisson structure in $\mathbb{R}^3$
whose Casimir defines a smooth non-compact surface for most values of $p_\pm$, which becomes an orbifold
when either $p_+$ or $p_-$ vanishes. The singularity appears because the $S^1 \times S^1$ action is not free but 
has isotropy exactly for the sleeping tops for which $p_+$ or $p_-$ vanishes.
In the following we are going to describe this orbifold and its regularisation / blow-up.
This will allow for a smooth description of motion at and near $w = 0$ and $w=\infty$ for arbitrary time.

The dynamics on $S^2$ with  rotational symmetry around the $z$-axis is best described using 
complex variables $a = a_x + i a_y$, $l = l_x + i l_y$. The $S^1$ action in these variables is 
simply multiplication $(a, l) \mapsto (a e^{i\phi}, l e^{i \phi})$ and the invariants of the $S^1$ action
are $a \bar a \ge 0$, $T = l \bar l \ge 0$, and the complex $a \bar l = u + i v$. These invariants satisfy 
the relation $u^2 + v^2 = | a \bar l |^2 = a \bar a \, T$. The trivial invariants $z$ and $l_z$ 
are related to these invariants through $z^2 + a \bar a = 1$ and $ z l_z + u = \a \cdot \l = L_3$.
Using these to eliminate $u$ and $a \bar a$ in the relation gives the cubic Casimir
\[
     C(T,z,v) =  (L_3 - z l_z)^2 + v^2 - (1 - z^2) T = 0 
\]
and the Hamiltonian
\[
    H(T,z) = \tfrac12 T + U(z)\,.
\]
The Poisson structure is given by taking the cross product with the gradient of $C$.
The zero-level of the Casimir defines a surface which is the reduced phase space. It is a  non-compact surface.
It is smooth unless $L_3 \pm l_z = 0$. When $L_3 \pm l_z = 0$ then the reduced phase space is  
an orbifold with singular point $z = \mp 1$, $v = 0$. We are now going to show that these singular 
points are indeed conical singularities. 

From now on $L_3 = \mp l_z$.
Firstly, translate the singular point to the origin, $z = \mp 1 \pm \Delta z$, such 
that the Casimir becomes $l_z^2 \Delta z^2 - 2 T \Delta z + v^2  + T \Delta z^2$.
Both singular points at $z = -1 +\Delta z$ and $z = 1 - \Delta z$ lead to the same Casimir.
Secondly, rotate the $(T, \Delta z)$ plane so that the Hessian at the origin (which is the singular point) is diagonal.
Thirdly, scale the new coordinates so that the eigenvalues of the Hessian at the origin are equal in magnitude.
Together this gives an affine area-preserving transformation of $(T,z)$ to new coordinates $(X,Y)$ such that 
the Casimir is
\[
    \tilde C(X,Y,v) = -X^2 + Y^2 + v^2 + (X+Y)^2 (X \lambda_+ + Y \lambda_-) (4 + l_z^4)^{-3/4}
\]
where $2 \lambda_\pm = l_z^2 \pm \sqrt{4 + l_z^4}$ so that $\lambda_+ \lambda_- = -1$.  
The quadratic terms describe the conical singularity at the origin.
The cone can be ``unrolled'' onto the plane by introducing polar coordinates for $(Y,v)$ where $X$ is the radius 
and then doubling the angle. At quadratic order this amounts to introducing new cartesian coordinates $Y + i v = (\tilde Y + i \tilde v)^2/r = (\tilde Y^2 - \tilde v^2 + 2 i \tilde v \tilde Y)/r$ and $X = r$ where $r^2 =  \tilde Y^2 + \tilde v^2$.

This process gives an equation that is equivalent to a double cover of the real $P_V$ near the singular points $ w = 0$, $w = \infty$.
The main difference to the equation in the previous section is that there we had a complex 2nd order equation
corresponding to real solutions of the only partially symmetry reduced Lagrange top. 
By contrast, the conical singularity of the Poisson structure leads to a single real 2nd order equation 
that corresponds to real solutions of the fully symmetry reduced Lagrange top. 
The additional dimensions in the previous section were a consequence of the fact that there we 
did not consider the fully symmetry reduced Lagrange top.


%

 \printbibliography

\end{document}